\DeclareMathOperator{\tr}{Tr}
\newtheorem{thm}{Theorem}
\newtheorem{defn}{Definition}
\newtheorem{prop}{Proposition}
\begin{document}

\title{An Initialization-free Quantum Algorithm for General Abelian Hidden Subgroup Problem}
\author{Sekang Kwon}
\affiliation{Department of Mathematics, Kyung Hee University, Seoul 02447, Republic of Korea}
\author{Jeong San Kim}
\email{freddie1@khu.ac.kr}
\affiliation{Department of Applied Mathematics and Institute of Natural Sciences, Kyung Hee University, Yongin 17104, Republic of Korea}

\setlength{\parindent}{18pt}

%%%%%%%%%%%%%%%%%%%%%%%%%%%%%%%
%%%%   Abstract     %%%%%%%%%%%
%%%%%%%%%%%%%%%%%%%%%%%%%%%%%%%
\begin{abstract}
Hidden Subgroup Problem (HSP) seeks to identify an unknown subgroup $H$ of a group $G$ for a given injective function $f$ defined on cosets of $H$. Here, we present an initialization-free quantum algorithm for solving HSP in the case where $G$ is a finite abelian group. Our algorithm can adopt an arbitrary unknown mixed state as the auxiliary register and removes the need for initialization while preserving computational cost comparable to existing methods. Our algorithm also restores the state of the auxiliary register to its original form after completing the computations. Since the recovered state can be utilized for other operations, a single preparation of the auxiliary register in an arbitrarily unknown mixed state is sufficient to execute the iterative procedure in solving hidden subgroup problems. This approach provides a promising direction for improving quantum algorithm efficiency by reducing operational time of initialization.
\end{abstract}
\maketitle
	
%%%%%%%%%%%%%%%%%%%%%%%%%%%%%%%
%%%%   Introduction     %%%%%%%
%%%%%%%%%%%%%%%%%%%%%%%%%%%%%%%
\section{Introduction}\label{intro}
	
Quantum computing introduces a novel paradigm for information processing, leveraging principles of quantum mechanics such as superposition and entanglement. Unlike classical computation, which uses bits as the fundamental units of information, quantum computation employs qubits, enabling the parallel exploration of multiple computational pathways. This unique capability allows quantum algorithms to solve certain computational problems exponentially faster than their classical counterparts. Notable examples include Simon's problem~\cite{Sim97} and the period-finding problem~\cite{Sho97}, which demonstrate the transformative potential of quantum computation.

Both Simon's problem and the period-finding problem can be understood as specific instances of a broader class of problems known as the \emph{Hidden Subgroup Problem (HSP)}. This framework, rooted in group theory, provides a unified description of such problems. Each instance of the HSP is characterized by the group under consideration. For example, Simon's problem is associated with the Boolean group $(\mathbb{Z}_2^n, \oplus)$, while the period-finding problem corresponds to the cyclic group $(\mathbb{Z}_N, +)$. Beyond these examples, there are diverse instances of the HSP arising from both abelian and non-abelian groups~\cite{BCvD05,CKL06,DJ92,Gav04,Hal05,Kup05}.
	
For the abelian HSP (AHSP), where the underlying group $G$ is abelian, the problem can be solved in polynomial time with respect to the logarithm of the group order, $\log{|G|}$~\cite{Lom04}. This success positions the HSP as a cornerstone in the development of quantum algorithms illustrating their advantage over classical algorithms. 
However, implementing quantum algorithms on real quantum computers introduces several practical challenges, including noise, gate imperfections, and the cost of state preparation. In particular, the initialization of quantum registers into a prescribed pure state can represent a non-negligible overhead, especially in contexts where quantum subroutines are executed repeatedly or intermediate quantum states must be reused. Under such conditions, state initialization may become a substantial experimental bottleneck.

To mitigate this issue, \emph{initialization-free quantum algorithms} have been proposed~\cite{CKL01, CKL05}. These algorithms allow the use of an arbitrary unknown mixed state as the auxiliary register, while ensuring that the state is restored after computation. By virtue of these features, initialization-free quantum algorithms can avoid auxiliary register initialization and utilize intermediate quantum states generated by other computations, while maintaining computational efficiency comparable to conventional quantum algorithms. This can also improve the space efficiency in the scenarios where intermediate quantum states are reused as auxiliary registers. However, existing works on initialization-free algorithms have been limited to specific problems such as the Deutsch-Jozsa problem, Simon's problem, and period-finding~\cite{CKL01, CKL05}. 

Here, we present an initialization-free quantum algorithm for the AHSP, extending the framework to all finite abelian groups. Our algorithm can adopt an arbitrary unknown mixed state as the auxiliary register and removes the need for initialization of the auxiliary register while preserving computational cost comparable to existing methods. Our algorithm also restores the state of the auxiliary register to its original form after completing the computations. Since many important quantum algorithms that achieve exponential speed-ups, including Simon's and Shor's algorithms, can be formulated within the framework of the AHSP, our result broadens the applicability of initialization-free techniques.

This paper is organized as follows. In Section~\ref{pre}, we introduce some definitions and propositions as preliminaries. In Section~\ref{sec:stan}, we review the standard quantum algorithm for AHSP. In Section~\ref{sec:ifqa}, we present our initialization-free algorithm for AHSP. In Section~\ref{sec:conclusion}, we conclude by summarizing our results and some future works.

%%%%%%%%%%%%%%%%%%%%%%%%%%%%%%%
%%%%   Preliminaries    %%%%%%%
%%%%%%%%%%%%%%%%%%%%%%%%%%%%%%%
\section{Preliminaries}\label{pre}
\subsection{Hidden Subgroup Problem}\label{sec:hidden subgroup problem}
In this section, we first recall the definition of the HSP and provide some basic concepts of the representation theory for finite abelian groups.
\begin{defn}{\textnormal{(HSP)}}\label{defn:HSP} \cite{NC00} 
Let $G$ be a group and $H$ be a subgroup of $G$. Consider a function $f$ from $G$ to a finite set $Y$ satisfying the following condition,
    \begin{equation}\label{separating} 
        \quad f(x)=f(y)\quad\text{if and only if}\quad xH=yH
    \end{equation}
    for all $x,y\in G$. HSP is to identify a generating set of $H$ using evaluations of $f$. 
\end{defn}
This definition of HSP provides the formal framework for many algebraic and number theoretic problems including Simon's problem and integer factorization: Simon's problem can be considered as the HSP with the Boolean group $(\mathbb{Z}_2^n, \oplus)$, and  the integer-factorization can be characterized as a period-finding problem, which is the HSP on the cyclic group $(\mathbb{Z}_N, +)$.
Classically, HSP is believed to be hard to solve in general. However, for AHSP, where $G$ is an abelian group, there exists a polynomial-time quantum algorithm with respect to $\log{|G|}$.

%\subsection{Character theory for abelian groups}\label{sec:character theory for}

Now we review some representation-theoretic notions for finite abelian groups to describe the quantum Fourier transform over abelian groups and the structure of coset states.
\begin{defn}{\textnormal{(Characters and Dual Group)}}\label{defn:character}
	Let $G$ be a group. A \emph{character} of $G$ is a homomorphism $\chi:G\to \mathbb{C}^*$, where $ \mathbb{C}^*$ is the multiplicative group of non-zero complex numbers. If $G$ is abelian, then the set of all characters of $G$ forms a group under pointwise multiplication, which we call the \emph{dual group} of $G$ denoted by $\hat{G}$.
\end{defn}

For a finite abelian group $G$, the dual group $\hat{G}$ is isomorphic to $G$~\cite{Note1}. Hence each character in $\hat{G}$ can be labeled by an element $g\in G$ denoted by $\chi_g$ and we will abuse notation by writing $g$ instead of $\chi_g$ when there is no confusion.
The \emph{annihilator} of $H$ is a subgroup of $\hat{G}$, which is useful when analyzing the quantum Fourier transform over abelian groups.

\begin{defn}{\textnormal{(Annihilator)}}\label{defn:annihilator}
	Let $G$ be an abelian group and $H$ be a subgroup of $G$. The \emph{annihilator} of $H$ is the subgroup $H^{\perp}\subseteq\hat{G}$ defined as
	\begin{equation}\label{H_perp}
		H^{\perp} = \{\chi_g \in \hat{G} \mid \chi_g(h) = 1 \text{ for all } h \in H\}.
	\end{equation}
\end{defn}
In other words, the annihilator of a subgroup $H$ consists of all characters that are trivial on the subgroup $H$. 
%In other words,  $\res_H^G\chi_g=\chi_e$ where $\res_H^G\chi_g$ is the restriction of $\chi_g$ to $H$ %and $e$ is the identity element of $H$. 
This notion naturally appears when evaluating the sum of character values over a subgroup. In particular, the sum of character values exhibits a strong cancellation property unless the character is trivial on the subgroup. This behavior follows from the orthogonality relation of characters.

\begin{prop}{\textnormal{(Orthogonality of characters)}}\label{prop:orthogonality of characters}
	Let $G$ be a finite abelian group and $\hat{G}$ be its dual group. For any $\chi_g, \chi_{g'} \in \hat{G}$, we have
	\begin{equation}\label{orthogonality}
		\frac{1}{|G|} \sum_{x \in G} \overline{\chi_g(x)} \chi_{g'}(x) = \delta_{g,g'},
	\end{equation}
	where $\delta_{g,g'}$ is the Kronecker delta function, which is equal to 1 if $g=g'$ and 0 otherwise.
\end{prop}
As a direct consequence, the sum of character values over a subgroup $H$ is equal to $|H|$ if the character is trivial on $H$ and $0$ otherwise. 
	
\subsection{The quantum Fourier transform over abelian groups and the coset state}\label{sec:the quantum fourier transform}

The Quantum Fourier Transform (QFT) is a core primitive in quantum algorithms for HSP. In this subsection, we introduce the QFT over finite abelian groups and describe its action on coset states.

\begin{defn}{\textnormal{(QFT over finite abelian groups)}}\label{defn:qft over abelian}
	Let $G$ be a finite abelian group and $\hat{G}$ be its dual group. The QFT over $G$, denoted by $\mathcal{F}_G$, is the unitary operator on $\mathbb{C}^{|G|}$ defined as
	\begin{equation}
		\mathcal{F}_G\ket{x} = \frac{1}{\sqrt{|G|}} \sum_{g \in \hat{G}} \chi_g(x) \ket{g},
	\end{equation}
	for each $x\in G$.
\end{defn}

Let $G$ be a finite ableian group and $H$ be a subgroup of $G$. For a coset of the subgroup $H$, its coset state is defined as the uniform superposition of all the elements in the coset;  
\begin{defn}\textnormal{(Coset state)}\label{defn:coset state}
	Let $G$ be a finite abelian group and $H$ be a subgroup of $G$. For a representative $r\in G$, the \emph{coset state} associated with the coset $r+H$ is defined as
	\begin{equation}
		\ket{r+H} = \frac{1}{\sqrt{|H|}} \sum_{h \in H} \ket{r+h}.
	\end{equation}
\end{defn}

For a finite abelian group $G$ with a subgroup $H$, let us consider the action of the QFT on a coset state; applying $\mathcal{F}_G$ to $\ket{r+H}$ gives
\begin{align}\label{eq:qft_coset_state}	
	\mathcal{F}_G\ket{r+H} & = \frac{1}{\sqrt{|H|}} \sum_{h \in H} \mathcal{F}_G\ket{r+h}\nonumber \\
	& = \frac{1}{\sqrt{|H|}} \sum_{h \in H} \frac{1}{\sqrt{|G|}} \sum_{g \in \hat{G}} \chi_g(r+h) \ket{g}\nonumber \\
	& = \frac{1}{\sqrt{|H||G|}} \sum_{g \in \hat{G}} \chi_g(r) \left( \sum_{h \in H} \chi_g(h) \right) \ket{g}. 
\end{align}

By the orthogonality of characters, we have
\begin{equation}\label{sum_otho}
	\sum_{h \in H} \chi_g(h) = \begin{cases}
		|H| & \text{if } g \in H^{\perp}, \\
		0 & \text{otherwise}.
	\end{cases}
\end{equation}

From Eq.~(\ref{eq:qft_coset_state}) together with Eq.~(\ref{sum_otho}), we have 
\begin{equation}\label{QFTco}
	\mathcal{F}_G\ket{r+H}=\sqrt{\frac{|H|}{|G|}}\sum_{g\in H^{\perp}}\chi_g(r)\ket{g}.
\end{equation}
In other words, applying the QFT to a coset state produces a superposition over characters in the annihilator $H^{\perp}$. This allows us to extract information about the subgroup $H$ by performing QFT to coset states, which is a key step in efficient quantum algorithms for HSP.

%%%%%%%%%%%%%%%%%%%%%%%%%%%%%%%%%%%%%%%%%%%%%%%%%%%%%%%%
%%%%   The standard quantum algorithm for AHSP   %%%%%%%
%%%%%%%%%%%%%%%%%%%%%%%%%%%%%%%%%%%%%%%%%%%%%%%%%%%%%%%%

\section{The standard quantum algorithm for AHSP}\label{sec:stan}
Before we provide an initialization-free quantum algorithm for AHSP in the next section, we review the standard polynomial-time quantum algorithm for AHSP \cite{Wol19}. For a finite abelian group $G$ and its subgroup $H$ hidden by the function $f$ satisfying the condition \eqref{separating}, the algorithm finds a generating set of $H$ in polynomial time with respect to $\log{|G|}$.
 
To process the function evaluation, the algorithm employs a quantum oracle $\mathcal{U}_f$ defined as 
\begin{equation}\label{eq:U_f}
\mathcal{U}_f\ket{x}_A\ket{y}_B=\ket{x}_A\ket{y+f(x)}_B,
\end{equation}
where the subscript $A$ denotes the main register to deal with the group elements $x$ in $G$, and $B$ is the auxiliary register for storing the function values of $f$. Here we note that the binary operation $+$ in the auxiliary register can be justified in the context of defining the quantum oracle $\mathcal{U}_f$, although the set $Y$ does not inherently possess any algebraic structure.
Moreover, to support the initialization-free quantum algorithm for AHSP in Section \ref{sec:ifqa}, we assume that $f$ is {\em surjective} on $Y$ throughout this paper, which allows us to regard $Y$ as a finite abelian group isomorphic to $G/H$.

The standard algorithm for AHSP is as follows: 
\begin{enumerate}
	\item Prepare the initial state in $\ket{0}_A\ket{0}_B$.
	\item \label{item:quantum Fourier transform} Apply $\mathcal{F}_G$ on the main register $A$ to yield 
	\begin{equation}
	\frac{1}{\sqrt{|G|}}\sum_{x\in G}\ket{x}_A\ket{0}_B.	
	\end{equation}
	%$\frac{1}{\sqrt{|G|}}\sum_{x\in G}\ket{x}_A\ket{0}_B$.
	\item \label{item:function evaluation} Evaluate function values of $f$ in the register $B$ by querying the quantum oracle $\mathcal{U}_f$ to obtain 
	\begin{equation}\label{eq:function evalutation}
		\frac{1}{\sqrt{|G|}}\sum_{x\in G}\ket{x}_A\ket{f(x)}_B=\sqrt{\frac{|H|}{|G|}}\sum_{r\in R}\ket{r+H}_A\ket{f(r)}_B,	
	\end{equation}
	where $R$ is a set of representatives of the cosets in $G/H$.
	\item \label{item:second fourier transform} Apply $\mathcal{F}_G$ on $A$ again to obtain 
	\begin{equation}\label{eq:stan final}
		\frac{|H|}{|G|}\sum_{r\in R}\sum_{h\in H^{\perp}}\chi_{h}(r)\ket{h}_A\ket{f(r)}_B=:\ket{\psi_{std}}_{AB},
	\end{equation}
	where $R$ is a set of representatives of the cosets in $G/H$.
	\item Measure the register $A$ in the computational basis.
\end{enumerate}
From Step \ref{item:function evaluation} to \ref{item:second fourier transform}, we expressed register A in Eq.~\eqref{eq:function evalutation} into coset states and used the property of the coset state discussed as Eq.~(\ref{QFTco}) of Section~\ref{sec:the quantum fourier transform}.
 
By measuring the register $A$ in the computational basis, we obtain a random sample $g$ of $H^{\perp}$ with probability $|H|/|G|$ that comes from the following equalities:
\begin{align}\label{stan:prob}
	\Pr(g)=&\tr\left[\left(\ket{g}_A\negthinspace\bra{g}\otimes I_B\right)\ket{\psi_{std}}_{AB}\negthinspace\bra{\psi_{std}}\right]\nonumber\\
	=&\frac{|H|^2}{|G|^2}\sum_{r,r'\in R}\sum_{h,h'\in H^{\perp}}\overline{\chi_{h'}(r')}\chi_h(r)\braket{h'|g}_A\negthinspace\braket{g|h}\braket{f(r')|f(r)}_B\nonumber\\
	=&\frac{|H|^2}{|G|^2}\sum_{r\in R}1\nonumber\\
	=&\frac{|H|}{|G|}.
\end{align}

By iterating this process $\mathcal{O}(\log|G|)$ times, we can find a generating set of $H^{\perp}$ with high probability. Then we can efficiently restore a generating set of $H$ from $H^{\perp}$ by solving a system of linear equations \cite{Lom04}. Because we query once per each iteration, the number of queries equals to the number of iteration. In addition, from the fact that QFT can be achieved in $\mathcal{O}(\log^2{|G|})$ operations \cite{NC00}, the number of operations is $\mathcal{O}(\log^3|G|)$. The following proposition summarizes the standard quantum algorithm for AHSP.
\begin{prop}\label{prop:stan} 
	Let $G$ be a finite abelian group and $H$ be a subgroup of $G$. Adopting a quantum oracle $U_f$ defined in Eq.~\eqref{eq:U_f}, the standard quantum algorithm for AHSP solves the problem with $\mathcal{O}(\log{|G|})$ oracle queries and $\mathcal{O}(\log^3|G|)$ operations.
\end{prop}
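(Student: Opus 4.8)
The plan is to assemble the proof from the state-evolution analysis already carried out in Eqs.~\eqref{stan:initial}--\eqref{stan:prob} and then supply the two pieces that the running text only gestures at: the sampling bound that converts uniform samples of $H^{\perp}$ into a generating set, and the bookkeeping of the operation count. The correctness of a single round is essentially complete: propagating $\ket{\psi_0}_{AB}$ through the first QFT, the oracle $\mathcal{U}_f$, and the second QFT produces $\ket{\psi_3}_{AB}$ in Eq.~\eqref{stan:QFT2}, and the computation in Eq.~\eqref{stan:prob}---which relies on $\braket{f(r)|f(r')}_B=\delta_{r,r'}$, a consequence of the separating condition~\eqref{separating} that makes $f$ injective on cosets---shows that a computational-basis measurement of register $A$ returns each $\tau\in H^{\perp}$ with probability $|H|/|G|$. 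Since $|H^{\perp}|=h_1\dotsm h_k=|G|/|H|$ by Lemma~\ref{lem:H_perp}, these probabilities sum to one, so each round delivers an element of $H^{\perp}$ drawn uniformly at random.

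The first step I would then carry out is the sampling argument. Let $S_m\le H^{\perp}$ be the subgroup generated by the first $m$ samples. Whenever $S_m$ is proper in $H^{\perp}$ it has index at least $2$, so a fresh uniform sample falls outside $S_m$---and hence strictly enlarges the generated subgroup---with probability at least $1/2$. Because every strict enlargement at least doubles the order, the chain $S_1\le S_2\le\dotsm$ can increase strictly at most $\log_2|H^{\perp}|\le\log_2|G|$ times before reaching $H^{\perp}$. A routine tail estimate---the number of rounds needed to accumulate the at most $\log_2|G|$ strict enlargements is stochastically dominated by a negative-binomial variable with success probability at least $1/2$ and mean $\mathcal{O}(\log|G|)$---then shows that $m=\mathcal{O}(\log|G|)$ rounds generate all of $H^{\perp}$ with probability at least $1-\epsilon$ for any fixed $\epsilon>0$. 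I expect this to be the main obstacle, since it is the only genuinely probabilistic ingredient and the constant hidden in $\mathcal{O}(\cdot)$ is set by the desired success probability.

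Given a generating set of $H^{\perp}$, the next step recovers one for $H$. By Definition~\ref{defn:orthogonal_subgroup}, the conditions $x\cdot y=0$ ranging over the generators $y$ of $H^{\perp}$ form a finite system of linear congruences whose solution set is exactly $(H^{\perp})^{\perp}=H$; solving it with the linear-algebra procedure of Ref.~\cite{Lom04} returns generators of $H$ in time polynomial in $\log|G|$. Finally I would tally the cost: each round uses one query to $\mathcal{U}_f$ together with two applications of $\mathcal{F}_G$, and since $\mathcal{F}_G$ factorizes as in Eq.~\eqref{QFT} each such transform costs $\mathcal{O}(\log^2|G|)$ elementary gates~\cite{NC00}. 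Multiplying by the $\mathcal{O}(\log|G|)$ rounds yields $\mathcal{O}(\log|G|)$ oracle queries and $\mathcal{O}(\log^3|G|)$ operations, with the concluding linear-algebra step absorbed into the same polynomial budget---which is precisely the claim.
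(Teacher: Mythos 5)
Your proposal is correct and follows essentially the same route as the paper: the same state evolution through Eqs.~\eqref{stan:initial}--\eqref{stan:prob}, the same uniform-sampling of $H^{\perp}$, the same reduction to $H$ via linear algebra~\cite{Lom04}, and the same cost accounting via the $\mathcal{O}(\log^2|G|)$ QFT. The only difference is that you make explicit the subgroup-chain/doubling argument behind the $\mathcal{O}(\log|G|)$ iteration count, which the paper merely asserts; this is a standard and correct filling-in rather than a different approach.
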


\begin{figure*}[t!]
	\centering
	\includegraphics[width=0.75\textwidth]{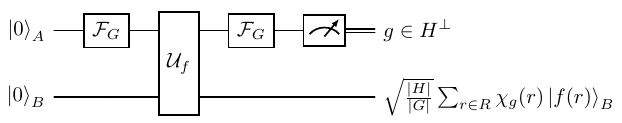}
	\caption{The quantum circuit of standard quantum algorithm for AHSP. $\mathcal{F}_G$ denotes the QFT on the group $G$ and $R$ is a set of representatives of $G/H$. The post-measurement state of the auxiliary register $B$ is a superposition over the whole computational basis. One has to restore the initial state of $B$ to perform another round of the algorithm.}\label{fig:stan}
\end{figure*}

Here, we would like to pay attention to the post-measurement state of the auxiliary register $B$ after the measurement of the register $A$. When we measure the register $A$ and obtain the outcome $g$, the auxiliary register $B$ will be in the state  
\begin{equation}\label{stan:outcome}
	\sqrt{\frac{|H|}{|G|}}\sum_{r\in R}\chi_{g}(r)\ket{f(r)}_B,
\end{equation}
which we illustrate in Figure~\ref{fig:stan}.

As one has to iterate the algorithm to get a solution of an AHSP, the initialization problem arises here because the auxiliary register $B$ is random state of the form Eq.~\eqref{stan:outcome} depending on the measurement outcome $g$. Thus one has to restore the initial state of $B$ to perform another round of the algorithm. In the following section, we present an algorithm that does not require the initialization of the auxiliary register and restores its initial state at the end of computation.

%%%%%%%%%%%%%%%%%%%%%%%%%%%%%%%%%%%%%%%%%%%%%%%%%%%%%%%%%%%%%%%%
%%%%   Initialization-free Quantum Algorithm for AHSP    %%%%%%%
%%%%%%%%%%%%%%%%%%%%%%%%%%%%%%%%%%%%%%%%%%%%%%%%%%%%%%%%%%%%%%%%

\section{Initialization-free Quantum Algorithm for Abelian Hidden Subgroup Problem}\label{sec:ifqa}

In this section, we present an initialization-free quantum algorithm for AHSP utilizing an arbitrary unknown mixed state as the auxiliary register. Our algorithm restores its initial state at the end of the computation, while preserving computational cost comparable to the standard one in the previous section. 

 we define a unitary operator $\mathcal{S}_z$ composed of QFTs and a translation, which is used in our algorithm.
\begin{defn}{\textnormal{(The unitary operator $\mathcal{S}_z$)}}\label{defn:S_z}
	Let $Y$ be a set having $M$ elements. For a given $z\in Y$, a unitary operator $\mathcal{S}_z$ on $\mathbb{C}^{M}$ is defined as
	\begin{equation}\label{S_z}
		\mathcal{S}_z=\mathcal{F}_M\mathcal{T}_{z}^{\dagger}\mathcal{F}_M,
	\end{equation}
	where $\mathcal{F}_M$ is the QFT over a cyclic group of order $M$ and $\mathcal{T}_z$ is the translation operator that shifts $\ket{y}$ to $\ket{y+z}$ for every $y\in Y$.
\end{defn}

For each $y\in Y$, the action of $\mathcal{S}_z$ on $\ket{y}$ is given by
\begin{equation}\label{S_z_action}
	\mathcal{S}_z\ket{y}=\omega_{M}^{yz}\ket{-y},
\end{equation}
where $\omega_M=e^{2\pi i/M}$. We will use $\mathcal{S}_z$ to encode the function values of $f$ as phases in the first register and to return the state of the auxiliary register to its initial state.

The initialization-free quantum algorithm for AHSP adopts an unknown mixed state $\rho_B=\sum_{y\in Y}\lambda_y\ket{y}_B\negthinspace\bra{y}$ as the auxiliary register, hence the initial state of our algorithm is considered to be prepared in $\ket{0}_A\ket{y}_B$ with probability $\lambda_y$. The algorithm proceeds as follows:
\begin{enumerate}
	\item Prepare the initial state in $\ket{0}_A\ket{y}_B$, with probability $\lambda_y$ where $y\in Y$.
	\item %Similarly with Step \ref{item:quantum Fourier transform} and \ref{item:function evaluation} in Sec. \ref{sec:stan}, 
	Apply $\mathcal{F}_G$ on the register $A$ followed by querying to the quantum oracle $\mathcal{U}_f$ to obtain
	\begin{equation}\label{eq:function evaluation y}
		\frac{1}{\sqrt{|G|}}\sum_{x\in G}\ket{x}_A\ket{y+f(x)}_B=\sqrt{\frac{|H|}{|G|}}\sum_{r\in R}\ket{r+H}_A\ket{y+f(r)}_B.
	\end{equation}
	\item For a randomly chosen $z\in Y$, apply the unitary operator $\mathcal{S}_z$ on the register $B$ to yield
	\begin{equation}\label{eq:first S_z}
		\sqrt{\frac{|H|}{|G|}}\sum_{r\in R}\omega_M^{z(y+f(r))}\ket{r+H}_A\ket{-\left(y+f(r)\right)}_B.
	\end{equation}
	\item Query to $\mathcal{U}_f$ and applying $\mathcal{S}_z$ on the register $B$ again to obtain
	\begin{equation}\label{eq:second S_z}
		\sqrt{\frac{|H|}{|G|}}\sum_{r\in R}\omega_M^{zf(r)}\ket{r+H}_A\ket{y}_B.
	\end{equation}
	\item Apply $\mathcal{F}_G$ on the register $A$ to obtain
	\begin{equation}\label{eq:if final}
		\left(\frac{|H|}{|G|}\sum_{r\in R}\sum_{h\in H^{\perp}}\omega_M^{zf(r)}\chi_h(r)\ket{h}_A\right)\ket{y}_B=:\ket{\psi_{if}^{(z)}}_A\ket{y}_B.
	\end{equation}
	\item Measure the register $A$ in the computational basis.
\end{enumerate}

In Eq.~\eqref{eq:if final}, coefficients do not depend on $y$, hence the left-hand side of Eq.~\eqref{eq:if final} can be written in product form as the right-hand side. Because every basis state $\ket{y}_B$ of $\rho_B$ is recovered at the end of computation, $\rho_B$ is also recovered by virtue of linearity of unitary operations.

By measuring $\ket{\psi_{if}^{(z)}}_A$, we can sample an element $g\in H^{\perp}$. Note that the final state $\ket{\psi_{if}^{(z)}}_A$ depend on the random choice of $z$; for a given $z \in Y$,
%. 
%Thus, we consider the expected conditional probability $\mathbb{E}_{z\in Y}[\Pr(g\mid z)]$ over $z$. We first calculate 
the conditional probability $\Pr(g\mid z)$ of obtaining $g$ is
% when an arbitrary $z\in Y$ is given,
\begin{align}\label{eq:if prob_z}
	\Pr(g\mid z)=&\tr\left[\ket{g}_A\negthinspace\braket{g | \psi_{if}^{(z)}}_A\negthinspace\bra{\psi_{if}^{(z)}}\right]\nonumber\\
	=&\frac{|H|^2}{|G|^2}\sum_{r,r'\in R}\omega_M^{z(f(r)-f(r'))}\chi_g(r-r').
\end{align}
Because $z$ is chosen at uniformly random, the expected probability $\mathbb{E}_{z\in Y}[\Pr(g\mid z)]$ of obtaining $g$ over all $z \in Y$ is
\begin{align}\label{eq:if expected prob}
	\mathbb{E}_{z\in Y}[\Pr(g\mid z)]=&\frac{1}{|Y|}\sum_{z\in Y}\Pr(g\mid z)\nonumber\\
	=&\frac{|H|^2}{|G|^2|Y|}\sum_{r,r'\in R}\sum_{z\in Y}\omega_M^{z(f(r)-f(r'))}\chi_g(r-r')\nonumber\\
	=&\frac{|H|}{|G|}
\end{align}
where the last equality is from 
%the fact that the sum of $\omega_M^{z(f(r)-f(r'))}$ over $z$ is zero unless $f(r)=f(r')$ and $|Y|=|G|/|H|$. 
\begin{equation}\label{sum_ff2}
	\sum_{z \in Y} \omega_M^{z(f(r)-f(r'))} = \begin{cases}
		|Y|=|G|/|H| & \text{if}~f(r)=f(r'), \\
		0 & \text{if}~ f(r)\neq f(r').
	\end{cases}
\end{equation}
In Figure~\ref{fig:ifqa}, we illustrate our initialization-free algorithm with a circuit diagram. 

\begin{figure*}[t!]
	\centering
	\includegraphics[width=0.75\textwidth]{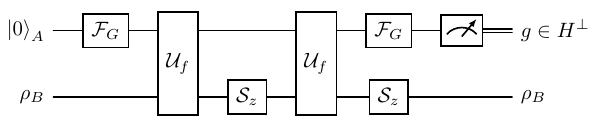}
	\caption{The initialization-free quantum circuit for AHSP. $\mathcal{F}_G$ denotes the QFT on the group $G$. This algorithm employs an arbitrary unknown mixed state $\rho_B$ as the initial state of the register $B$. When one measures the register $A$, the same probability distribution to the standard quantum algorithm in Sec.~\ref{sec:stan} is given. In addition, $\rho_B$ is restored to the initial state at the end of the computation.}
	\label{fig:ifqa}
\end{figure*}

As the expected probability of obtaining $g\in H^{\perp}$ in Eq.~(\ref{eq:if expected prob}) is same as that of the standard algorithm, Eq.~(\ref{stan:prob}) in Section~\ref{sec:stan}, we need same number of iterations to find a generating set of $H^{\perp}$. In addition, the implementation of $\mathcal{S}_z$ requires $\mathcal{O}(\log^2|Y|)$ operations \cite{CKL00}, which is negligible due to the the surjectivity of $f:G\to \protect Y$ therefore $|Y|\leq|G|$. In other words, our algorithm requires $\mathcal{O}(\log{|G|})$ oracle queries and $\mathcal{O}(\log^3|G|)$ operations, which is the same as the standard quantum algorithm for AHSP. We finish this section with the following theorem that summarizes our results.

\begin{thm}{\textnormal{(The initialization-free quantum algorithm for AHSP)}}\label{thm:ifqa} 
	Let $G$ be a finite abelian group and $H$ be a subgroup of $G$ hidden by a surjective oracle function $f:G\to \protect Y$. The initialization-free quantum algorithm for AHSP requires same resources as the standard quantum algorithm for AHSP. That is, it requires $\mathcal{O}(\log{|G|})$ oracle queries and $\mathcal{O}(\log^3|G|)$ operations.
\end{thm}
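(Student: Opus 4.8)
The plan is to establish the theorem in two parts: first, that the algorithm produces exactly the same sampling distribution on $H^{\perp}$ as the standard algorithm of Proposition~\ref{prop:stan}, so that the number of iterations needed to recover a generating set is unchanged; and second, that each iteration costs no more, up to constants, than a standard one. The correctness part rests on a single structural claim that I would isolate and prove first: the sandwich $(I_A\otimes\mathcal{S}_z)\,\mathcal{U}_f\,(I_A\otimes\mathcal{S}_z)$ applied to $\ket{\phi_1}_{AB}$ imprints the phase $\omega_{h_1}^{z_1f(x)_1}\dotsm\omega_{h_k}^{z_kf(x)_k}$ on register $A$ while returning register $B$ to the \emph{same} unknown state $\ket{\Phi}_B$ it started from. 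I would verify this by a direct computation from Definition~\ref{defn:S_z}: the first $\mathcal{S}_z$ produces the $b$-dependent phase $\prod_j\omega_{h_j}^{z_j(b_j+f(x)_j)}$ and sends $\ket{y}_B\mapsto\ket{-y}_B$; the intervening $\mathcal{U}_f$ cancels the $f(x)$ shift by additivity, leaving $\ket{-b}_B$; and the second $\mathcal{S}_z$ contributes $\prod_j\omega_{h_j}^{-z_jb_j}$, which cancels the $b$-dependent part of the phase and restores $\ket{b}_B$. The upshot is Eq.~\eqref{ifqa:U_f2}, in which the auxiliary amplitudes $\beta_{b_1,\dotsc,b_k}$ no longer appear in the main-register phase.

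With this structural fact in hand, I would rewrite $\ket{\phi_3}_{AB}$ in the coset form of Eq.~\eqref{ifqa:coset}, apply $\mathcal{F}_G$ via Eq.~\eqref{QFT_coset3}, and compute the outcome probability $\Pr_z(\tau)$ for fixed $z$ as in Eq.~\eqref{ifqa:prob_z}. The remaining step is the average over the uniformly random $z\in Y$: the only $z$-dependence sits in the character $\prod_j\omega_{h_j}^{z_j(f(r)_j-f(r')_j)}$, and summing each factor over $z_j\in\mathbb{Z}_{h_j}$ gives $h_j$ when $f(r)_j=f(r')_j$ and $0$ otherwise. Since $R$ is a transversal of $G/H$ and $f$ is injective on cosets, this forces $r=r'$, collapsing the double sum in Eq.~\eqref{ifqa:prob} to its diagonal and yielding the target value $|H|/|G|$, identical to Eq.~\eqref{stan:prob}.

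To cover an arbitrary unknown \emph{mixed} auxiliary state, I would invoke linearity rather than repeat the calculation. Writing $\rho_B=\sum_ip_i\ket{\Phi_i}_B\!\bra{\Phi_i}$ and packaging one $z$-averaged iteration as the channel $\Lambda$ of Eq.~\eqref{ifqa:superoperator}, the measurement probability $\tr[(\ket{\tau}_A\!\bra{\tau}\otimes I_B)\,\Lambda(\ket{0}_A\!\bra{0}\otimes\rho_B)]$ is linear in $\rho_B$; since each pure component contributes $|H|/|G|$ independently of $\ket{\Phi_i}_B$, so does the mixture, giving Eq.~\eqref{ifqa:mixed_prob}. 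Because the per-iteration distribution matches the standard algorithm exactly, the iteration count $\mathcal{O}(\log|G|)$ and the subsequent linear-algebra recovery of a generating set of $H$ from $H^{\perp}$ carry over verbatim. For the operation count I would tally one iteration: two QFTs at $\mathcal{O}(\log^2|G|)$ each, a constant number of oracle calls, and two applications of $\mathcal{S}_z$ at $\mathcal{O}(\log^2|Y|)$ each; since $|Y|=|G|/|H|\leq|G|$ the $\mathcal{S}_z$ cost is absorbed, leaving $\mathcal{O}(\log^2|G|)$ per iteration and $\mathcal{O}(\log^3|G|)$ overall, with $\mathcal{O}(\log|G|)$ oracle queries.

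I expect the one genuinely delicate point to be the restoration-and-state-independence claim of the first paragraph, since it is precisely there that the initialization-free property must be earned: the phase deposited on register $A$ has to lose \emph{all} dependence on the unknown coefficients $\beta_{b_1,\dotsc,b_k}$, and register $B$ has to return unchanged, simultaneously for every input state. The mechanism making this work is the interplay between the sign flip $\ket{y}\mapsto\ket{-y}$ built into $\mathcal{S}_z$ and the additive action of $\mathcal{U}_f$ on $Y\cong\bigoplus_j\mathbb{Z}_{h_j}$; I would take care to check that the surjectivity assumption on $f$, which endows $Y$ with this group structure, is exactly what licenses both the inverse $-y$ and the cancellation of the $b_j$-phases, since without it neither $\mathcal{S}_z$ nor the final restoration would be well defined.
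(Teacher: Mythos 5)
Your proposal is correct and follows essentially the same route as the paper: the $\mathcal{S}_z$--$\mathcal{U}_f$--$\mathcal{S}_z$ sandwich to transfer $f(x)$ into a $\beta$-independent phase while restoring $\ket{\Phi}_B$, the character sum over random $z\in Y$ collapsing Eq.~\eqref{ifqa:prob} to the diagonal $r=r'$ to recover $|H|/|G|$, linearity for the mixed-state case via the channel $\Lambda$, and the cost tally absorbing the $\mathcal{O}(\log^2|Y|)$ implementations of $\mathcal{S}_z$. Your explicit tracking of the summation index $b$ through the sandwich is in fact slightly more careful than the paper's Eq.~\eqref{ifqa:U_f}, where the $b$-dependent phase is written outside the ket $\ket{-\Phi}_B$, but the underlying computation is identical.
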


Since the auxiliary register is restored to its initial state, the algorithm enables one to utilize intermediate quantum states as the auxiliary register, which can reduce space requirements.

%%============%%
%% Conclusion %%
%%============%%
\section{Conclusion}\label{sec:conclusion}

In this work, we presented an initialization-free quantum algorithm for AHSP. 
Our algorithm incorporates two applications of the additional operation $S_z$ and one additional oracle query into each iteration of the standard quantum algorithm for AHSP. 
These operations encode the function values of $f$ into phases and uncompute the auxiliary register, allowing the use of an arbitrary unknown mixed state as the auxiliary register while restoring it to its original state after the computation. 
Although the algorithm requires additional operations, it achieves the same query and time complexity as the standard quantum algorithm, requiring $O(\log |G|)$ oracle queries and $O(\log^3 |G|)$ elementary operations.

Unlike conventional approaches, our algorithm does not require the auxiliary register to be initialized and it can adopt an arbitrary unknown mixed state. These features 
make our algorithm more practical for implementation on current quantum platforms. In addition, the ability to reuse intermediate quantum states can further improve space efficiency. Since many quantum algorithms providing exponential speed-up are reduced to AHSP, our method can be applied to a broad class of computational problems.

%%%%%%%%%%%%%%%%%%%%%%
%  Acknowledgments   %
%%%%%%%%%%%%%%%%%%%%%%
\section*{Acknowledgments}
This work was supported by Korea Research Institute for defense Technology planning and advancement (KRIT) grant funded by Defense Acquisition Program Administration(DAPA)(KRIT-CT-23–031), and the Institute for Information \& Communications Technology Planning \& Evaluation(IITP) grant funded by the Korean government(MSIP)(Grant No. RS-2025-02304540). JSK was supported by Creation of the Quantum Information Science R\&D Ecosystem(Grant No. 2022M3H3A106307411) through the National Research Foundation of Korea(NRF) funded by the Korean government(Ministry of Science and ICT).
%%%%%%%%%%%%%%%%%%%%%%

%%%%%%%%%%%%%%%%%%%%%%
%     References     %
%%%%%%%%%%%%%%%%%%%%%%

%%%%%%%%%%%%%%%%%%%%%%

\begin{thebibliography}{99}
\bibitem{Sim97}%
D. R. Simon,
On the power of quantum computation.
{SIAM Journal on Computing \textbf{26}, 1474--1483 (1997)}

\bibitem{Sho97}%
P.~W.~Shor, 
Polynomial-time algorithms for prime factorization and discrete logarithms on a quantum computer.
{SIAM Journal on Computing \textbf{26}, 1484--1509 (1997)}

\bibitem{BCvD05}%
D.~Bacon, A.~M.~Childs and W.~van~Dam,
From optimal measurement to efficient quantum algorithms for the hidden subgroup problem over semidirect product groups.
{46th Annual IEEE Symposium on Foundations of Computer Science (FOCS '05) 469--478 (2005)}

%\bibitem{BZ15}%
%Bermejo-Vega,~J., Zatloukal,~K.C.:
%Abelian hypergroups and quantum computation.
%{arXiv:1509.05806}
  
\bibitem{CKL06}%
D.~P.~Chi, J.~S.~Kim and S.~Lee,
Quantum algorithms for the hidden subgroup problem on some semi-direct product groups by reduction to abelian cases. 
{Phys. Lett. A \textbf{359}, 114--116 (2006).}

%\bibitem {CP07}%
%Cosme,~C.M.M., Portugal,~R.:
%Quantum algorithm for the hidden subgroup problem on a class of semidirect product groups.
%{arXiv:quant-ph/0703223}
%
%\bibitem{CvD07}%
%Childs,~A.M., van~Dam,~W.: 
%Quantum algorithm for a generalized hidden shift problem.
%{Proceedings of the Eighteenth Annual ACM-SIAM Symposium on Discrete Algorithms (SODA '07) 1225--1232 (2007)}

\bibitem {DJ92}%
D.~Deutsch and R.~Jozsa,
Rapid solution of problems by quantum computation.
{Proceedings of the Royal Society of London. Series A: Mathematical and Physical Sciences \textbf {439}, 553--558 (1992)}

%\bibitem{DMR10}%
%Denney,~A., Moore,~C., Russell,~A.:
%Finding conjugate stabilizer subgroups in psl(2; q) and related groups.
%{Quantum Info. Comput. \textbf{10}, 282--291 (2010)}
%
%\bibitem {FIMSS14}%
%Friedl,~K., Ivanyos,~G., Magniez,~F., Santha,~M., Sen,~P.:
%Hidden translation and translating coset in quantum computing.
%{SIAM Journal on Computing \textbf{43}, 1--24 (2014)}

\bibitem{Gav04}%
D.~Gavinsky,
Quantum solution to the hidden subgroup problem for poly-near-hamiltonian groups.
{Quantum Info. Comput. \textbf{4}, 229--235 (2004)}

\bibitem{Hal05}%
S.~Hallgren,
Fast quantum algorithms for computing the unit group and class group of a number field.
Proceedings of the Thirty-Seventh Annual ACM Symposium on Theory of Computing (STOC '05) 468--474 (2005)

%\bibitem{Hal07}%
%Hallgren,~S.:
%Polynomial-time quantum algorithms for pell's equation and the principal ideal problem.
%J. ACM \textbf{54}, 1--19 (2007)
%
%\bibitem{IG07}%
%Inui,~Y., Le~Gall,~F.:
%Efficient quantum algorithms for the hidden subgroup problem over semi-direct product groups.
%Quantum Info. Comput. \textbf{7}, 559--570 (2007)
%
%\bibitem{ISS07}%
%Ivanyos,~G., Sanselme,~L., Santha,~M.:
%An efficient quantum algorithm for the hidden subgroup problem in extraspecial groups.
%Proceedings of the 24th Annual Conference on Theoretical Aspects of Computer Science (STACS'07) 586--597 (2007)
%  
%\bibitem {ISS08}%
%Ivanyos,~G., Sanselme,~L., Santha,~M.:
%An efficient quantum algorithm for the hidden subgroup problem in nil-2 groups.
%Algorithmica \textbf{62}, 480--498 (2012)
%  
%\bibitem{HMRRS10}%
%Hallgren,~S., Moore,~C., R\"{o}tteler,~M., Russell,~A., Sen,~P.:
%Limitations of quantum coset states for graph isomorphism.
%J. ACM \textbf{57}, 1--33 (2010)
  
\bibitem{Kup05}%
G.~Kuperberg,
A subexponential-time quantum algorithm for the dihedral hidden subgroup problem.
SIAM Journal on Computing \textbf{35}, 170--188 (2005)
  
%\bibitem{MRRS04}%
%Moore,~C., Rockmore,D., Russell,~A., Schulman,~L.J.:
%The power of basis selection in fourier sampling: hidden subgroup problems in affine groups.
%Proceedings of the Fifteenth Annual ACM-SIAM Symposium on Discrete Algorithms (SODA '04) 1113--1122, (2004)
%
%\bibitem{RB98}%
%Rötteler,~M., Beth,~T.:
%Polynomial-time solution to the hidden subgroup problem for a class of non-abelian groups.
%arXiv:quant-ph/9812070
%
%\bibitem{Wal13}%
%Wallach,~N.:
%A quantum polylog algorithm for non-normal maximal cyclic hidden subgroups in the affine group of a finite field.
%arXiv:1308.1415
  
\bibitem{Lom04}%
C.~Lomont,
The hidden subgroup problem - review and open problems.
arXiv:quant-ph/0411037
  
\bibitem{CKL01}%
D.~P.~Chi, J.~Kim and S.~Lee,
Initialization-free generalized deutsch-jozsa algorithm.
J. Phys. A: Math. Gen. \textbf{34}, 5251--5258 (2001)
  
\bibitem{CKL05}%
D.~P.~Chi, J.~S.~Kim and S.~Lee,
Quantum algorithms without initializing the auxiliary qubits.
Phys. Rev. Lett. \textbf{95}, 080504 (2005)

\bibitem{NC00}%
M.~A.~Nielsen and I.~L.~Chuang,
Quantum Computation and Quantum Information.
Cambridge University Press (2000)
  
\bibitem {Note1}%
Assume that $G$ is isomorphic to $\mathbb{Z}_{p_1^{n_1}}\oplus\dotsm\oplus\mathbb{Z}_{p_k^{n_k}}$ for some primes $p_1, \dotsc, p_k$ and positive integers $n_1, \dotsc, n_k$. Then, there is an isomorphism $\rho:G\to\hat{G}$ defined by $\rho(g)(x)=e^{2\pi ig_1x_1/p_1^{n_1}}\dotsm e^{2\pi i g_kx_k/p_k^{n_k}}$ for all $g=(g_1,\dotsc,g_k),\,x= (x_1,\dotsc,x_k) \in \bigoplus_{i=1}^k \mathbb{Z}_{p_i^{n_i}}$.
  
\bibitem{Wol19}%
R.~de~Wolf,
Quantum computing: Lecture notes.
arXiv:1907.09415

\bibitem{CKL00}
J.~Kim, S.~Lee and D.~P.~Chi,
Quantum functional oracles.
J. Phys. A \textbf{35}, 6911 (2002)
% 
\end{thebibliography}
\end{document}